\newtheorem{theorem}{Theorem}
\newtheorem{definition}{Definition}
\newtheorem{Corollary}{Corollary}
\newtheorem{fact}{Fact}
\newtheorem{lemma}{Lemma}
\newtheorem{remark}{Remark}
\newcommand{\bk}[2]{\ensuremath{\langle #1 | #2 \rangle}}
\begin{document}
\title{How to check universality of quantum gates?}
\author{Adam Sawicki${}^{1}$, Lorenzo Mattioli${}^{1}$, Zolt\'an Zimbor\'as${}^{2,3}$}
%\email[E-mail: ]{a.sawicki@cft.edu.pl, mattioli@cft.edu.pl}
\affiliation{$^1$Center for Theoretical Physics PAS, Al. Lotnik\'ow 32/46, 02-668, Warszawa, Poland}
\affiliation{$^2$Wigner Research Centre for Physics,H-1525, P.O.Box 49, Budapest, Hungary} \affiliation{$^3$ BME-MTA Lend\"ulet Quantum Information Theory Research Group, Budapest, Hungary}
%\affiliation{$^2$School of Mathematics, University of Bristol, University Walk, Bristol BS8 1TW, UK}
\begin{abstract}
We provide two simple universality criteria. Our first criterion states that $\mathcal{S}\subset G_d:=U(d)$ is universal if and only if $\mathcal{S}$ forms a $\delta$-approximate $t(d)$-design, where $t(2)=6$ and $t(d)=4$ for $d\geq3$. Our second universality criterion says that  $\mathcal{S}\subset G_d$ is universal if and only if the centralizer of $\mathcal{S}^{t(d),t(d)}=\{U^{\otimes t(d)}\otimes \bar{U}^{\otimes t(d)}|U\in \mathcal{S}\}$ is equal to the centralizer of $G_d^{t(d),t(d)}=\{U^{\otimes t(d)}\otimes \bar{U}^{\otimes t(d)}|U\in G_d\}$, where $t(2)=3$, and $t(d)=2$ for $d\geq 3$. The equality of the centralizers can be verified by comparing their dimensions.
\end{abstract}
\maketitle

Universal and efficient quantum gates play a central role in quantum computing \cite{Barenco,Dojcz,Loyd,Reck,exp1,exp2,harrow,Sawicki1,Sawicki3, BA,selinger,klucz, bocharov, sarnak}. It is well known that in order to construct a universal set of gates for many qudits it is enough to take universal set for one qudit and extend it by a two-qudit entangling gate \cite{Brylinski,NC00} (see \cite{Oszmaniec1} for fermionic systems). On the other hand, it is a great challenge to find a time efficient procedure that enables deciding if a given set of gates  $\mathcal{S}\subset G_d:=U(d)$ is universal. One can consider an analogous problem for Hamiltonians, i.e. given a set of anti-Hermitian traceless matrices, $\mathcal{X}\subset \mathfrak{su}(d)$, we want to check if they generate whole algebra of the special unitary group, $\mathfrak{su}(d)$. The answer turns out to be relatively simple and can be phrased in terms of the centralizer of the tensor squares of the elements belonging to $\mathcal{X}$ \cite{daniel, zeier1, zeier} (cf. \cite{Laura,Claudio,Dallesandro}). The approach through tensor powers can be extended from Hamiltonians to quantum gates. To this end one considers the centralizers of $\mathcal{S}^{t,t}=\{U^{\otimes t}\otimes \bar{U}^{\otimes t}|U\in \mathcal{S}\}$ and $G_d^{t,t}=\{U^{\otimes t}\otimes \bar{U}^{\otimes t}|U\in G_d\}$ for any $t$. In \cite{Adam,Sawicki2} it was shown that the equality of the centralizers of $\mathcal{S}^{1,1}$ and $G_d^{1,1}$ implies that $\mathcal{S}$ is universal provided $\mathcal{S}$ generates an infinite subgroup of $G_d$. We will call the equality of the centralizers of $\mathcal{S}^{1,1}$ and $G_d^{1,1}$ the necessary universality condition. For $\mathcal{S}$ satisfying the necessary universality condition, in order to verify if the group generated by $\mathcal{S}$ is infinite it is enough to check if it forms an $\frac{1}{2\sqrt{2}}$-net, in the Hilbert-Schmidt distance \cite{Adam,Sawicki2,Mattioli-Sawicki,Freedman}. Following this idea the authors of \cite{Nets} showed that $\mathcal{S}$ is universal iff 1) $\mathcal{S}\subset G_d$ satisfies the necessary universality condition, 2) gates belonging to $\mathcal{S}$ form a $\delta$-approximate $t(d)$-design, with $\delta<1$  and $t(d)=O(d^{5/2})$. 

In this paper, we show that actually the required $t(d)$ does not grow with $d$. Utilizing the recent results regarding the so-called unitary $t$-groups \cite{Bannai20}, we formulate our first universality criterion (Theorem~\ref{main}). It states that $\mathcal{S}\subset G_d$ is universal if and only if $\mathcal{S}$ forms a $\delta$-approximate $t(d)$-design, where $t(2)=6$ and $t(d)=4$ for $d\geq3$. Using further properties of $t$-designs, we formulate our second universality criterion (Theorem~\ref{main3}) which says that  $\mathcal{S}\subset G_d$ is universal if and only if the centralizer of $\mathcal{S}^{t(d),t(d)}$ is equal to the centralizer $G_d^{t(d),t(d)}$, where $t(2)=3$, and $t(d)=2$ for $d\geq 3$. The equality of the centralizers can be verified by checking their dimensions. We calculate the dimensions of the centralizers of  $G_d^{t(d),t(d)}$ using representation theory methods, for all $t(d)$ listed above and use them in our final criterion for universality, i.e. Theorem \ref{main4}. Thus the problem reduces to finding the dimension of the centralizer of $\mathcal{S}^{t(d),t(d)}$ which is a standard linear algebra task. 

Let  $\{\mathcal{S},\nu_{\mathcal{S}}\}$ be an ensemble of quantum gates, where $\mathcal{S}$ is a finite subset of $G_d:=U(d)$ and $\nu_\mathcal{S}$ is the uniform measure on $\mathcal{S}$ and throughout the paper we assume that $\mathcal{S}$ contains identity. Such ensemble is called 
 $\delta(t,\nu_\mathcal{S})$-approximate $t$-design if and only if 
\begin{equation}\label{eq:DEFexpand}
\delta(t,\nu_{\mathcal{S}})\coloneqq\left\|T_{\nu_\mathcal{S},t}-T_{\mu,t} \right\|_\infty < 1\  ,
\end{equation}
where for any measure $\nu$ (in particular for the Haar measure $\mu$) we define a \emph{moment operator}
 \begin{equation}\label{eq:momentOP}
T_{\nu,t}\coloneqq\int_{G_d} d\nu(U) U^{\otimes t} \otimes \bar{U}^{\otimes t}\ .
\end{equation}
One can easily show that  $0\leq\delta(t,\nu_\mathcal{S}) \leq 1$ \cite{Nets}. When $\delta(t,\nu_\mathcal{S})=0$ we say that $\mathcal{S}$ is {\it an exact $t$-design} and when $\delta(t,\nu_\mathcal{S})=1$ we say that $\mathcal{S}$ is {\it not a $t$-design}.  In order to give an equivalent definition of an exact $t$-design \cite{BZ,gross}, recall the definition of an $S^t$-twirl of an operator $A$ is:
\begin{equation} \label{eq:altdef}
     \int_{G_d} d\nu_{\mathcal{S}}(U)
    U^{\otimes t} A \, (U^\dagger)^{\otimes t}.  
\end{equation}
 $\mathcal{S}$ is an exact $t$-design iff for any operator $A$  the $S^t$-twirl and the $G^t_d$-twirl coincides, i.e.,
\begin{equation} \label{eq:altdef}
     \int_{G_d} d\nu_{\mathcal{S}}(U)
    U^{\otimes t} A \, (U^\dagger)^{\otimes t}  =  \int_{G_d} \, d\mu(U) U^{\otimes t} A \, (U^\dagger)^{\otimes t} .
\end{equation}

To proceed, we note that a map $U\mapsto U^{\otimes t}\otimes \bar{U}^{\otimes t}$ is a representation of the unitary group $G_d$. This representation turns out to be reducible and as such it decomposes into some irreducible representations $\pi$ of $G_d$  
\begin{gather}\label{decomposition}
    U^{\otimes t}\otimes \bar{U}^{\otimes t}\simeq \bigoplus_{\pi}\pi(U)^{\oplus m_{\pi}}\simeq  \left(U\otimes \bar{U}\right)^{\otimes t},
\end{gather}
where $m_{\pi}$ is the multiplicity of $\pi$. The representations occurring in this decomposition are in fact irreducible representation of the projective unitary group, $PG_d=G_d/\sim$, where $U\sim V$ iff $U=e^{i\phi} V$. One can show that every irreducible representation of $PG_d$ arises this way for some, possibly large, $t$ \cite{Dieck}. For $t=1$ the decomposition (\ref{decomposition}) is particularly simple and reads $U\otimes \bar{U}=\mathrm{Ad}_U\oplus 1$, where $1$ stands for the trivial representation and $\mathrm{Ad}_U$ is the adjoint representation of $G_d$ and $PG_d\simeq\mathrm{Ad}_{G_d}$. 
%For a set of $\mathcal{S}\subset G_d$ we define
Next we define the notion of a group generated by a gate set.
\begin{definition}
Let $\mathcal{S}\subset G_d$ be a set of quantum gates. The group, $G_\mathcal{S}$, generated by $\mathcal{S}$ is  
\begin{gather}
    G_\mathcal{S}\coloneqq  \mathrm{cl} (<\mathcal{S}>),\label{closure}\\
    <\mathcal{S}>\coloneqq \bigcup_{l\in \mathbb{N}} \mathcal{S}_l,\,\,\ \mathcal{S}_l\coloneqq \{g_1g_2\ldots g_l|\, g_i\in \mathcal{S}\}. 
\end{gather}
\end{definition}
The closure in (\ref{closure}) corresponds to adding all the limiting points of the sequences of words from $\mathcal{S}_l$ when $l$ goes to infinity. 
\begin{definition}
A set $\mathcal{S}\subset G_d$ is universal if and only if $PG_d\simeq G_\mathcal{S}/\sim$, where $U\sim V$ iff $U=e^{i\phi} V$.
\end{definition}
Having any representation $\pi$ of $G_d$ we can consider its restriction to $G_\mathcal{S}$, which we denote by $\mathrm{Res}^{G_d}_{G_\mathcal{S}}\pi$. For any $U\in G_\mathcal{S}$ we simply have:
\begin{gather}
   \mathrm{Res}^{G_d}_{G_\mathcal{S}}\pi(U)=\pi(U).
\end{gather}
A crucial observation here is that $\mathrm{Res}^{G_d}_{G_\mathcal{S}}\pi$ can be a reducible representation of $G_\mathcal{S}$ even though $\pi$ is an irreducible representation of $G_d$. This observation plays a central role in representation theory (cf. branching rules) and turns out to be central in our context. For any irreducible representations occurring in the decomposition \eqref{decomposition} we can consider the restriction of $T_{\nu_{\mathcal{S}},t}$ to $\pi$ which is given by
\begin{gather}
    T_{\nu_{\mathcal{S}},t,\pi}=\int_{G_d} d\nu_\mathcal{S}(U)\pi(U).
\end{gather}
It follows directly from the  definitions and discussion above that 
\begin{gather}
    T_{\nu_{\mathcal{S}},t}\simeq \bigoplus_{\pi}\left (T_{\nu_{\mathcal{S}},t,\pi}\right)^{\oplus m_\pi},
    \\
    \delta(t,\nu_S)=\mathrm{sup}_{\pi}\|T_{\nu_{\mathcal{S}},t,\pi}\|_{\infty},
\end{gather}
where $\pi$ goes over irreducible representations occurring in the decomposition (\ref{decomposition}).

\begin{lemma}\label{remark1}
Let $\mathcal{S}\subset G_d$ be an arbitrary set of quantum gates. Then $\delta(t,\nu_{\mathcal{S}})=1$ if and only if for some nontrivial irreducible representation $\pi$ appearing in the decomposition (\ref{decomposition}) representation $\mathrm{Res}^{G_d}_{G_\mathcal{S}}\pi$ is reducible and contains a copy of the trivial representation. 
\end{lemma}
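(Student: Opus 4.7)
The plan is to analyze $\delta(t,\nu_\mathcal{S})$ block by block using the isotypic decomposition $T_{\nu_\mathcal{S},t}\simeq \bigoplus_{\pi}T_{\nu_\mathcal{S},t,\pi}^{\oplus m_\pi}$ (and likewise for $T_{\mu,t}$). By Schur's lemma, for an irreducible $\pi$ of $G_d$ the Haar block $T_{\mu,t,\pi}$ is the projector onto the space of $\pi(G_d)$-invariant vectors, so it equals the identity when $\pi$ is the trivial representation and vanishes otherwise. Because $\mathcal{S}$ contains the identity, $T_{\nu_\mathcal{S},t,\pi}=\mathbb{1}$ on the trivial summand as well; that block therefore contributes zero to the difference, and $\delta(t,\nu_\mathcal{S})$ equals the supremum of $\|T_{\nu_\mathcal{S},t,\pi}\|_\infty$ taken over the nontrivial irreducibles $\pi$ appearing in (\ref{decomposition}). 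Consequently $\delta(t,\nu_\mathcal{S})=1$ if and only if some such $\pi$ satisfies $\|T_{\nu_\mathcal{S},t,\pi}\|_\infty=1$.

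The heart of the argument is then to identify this norm condition with the existence of a trivial subrepresentation inside $\mathrm{Res}^{G_d}_{G_\mathcal{S}}\pi$. Writing $T_{\nu_\mathcal{S},t,\pi}=\frac{1}{|\mathcal{S}|}\sum_{U\in\mathcal{S}}\pi(U)$ as an average of unitary operators, its operator norm is bounded above by one. If a unit vector $v$ attains this bound, then $\|\frac{1}{|\mathcal{S}|}\sum_{U\in\mathcal{S}}\pi(U)v\|=1$ while each $\pi(U)v$ is itself a unit vector. The equality case of the triangle inequality in a Hilbert space---equivalently, strict convexity of the unit ball---forces all the vectors $\pi(U)v$ to coincide; evaluating at $U=\mathbb{1}\in\mathcal{S}$ identifies the common vector as $v$, so $\pi(U)v=v$ for every $U\in\mathcal{S}$. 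Continuity of $\pi$ together with the homomorphism property then extends the fixing property to the entire closed group $G_\mathcal{S}$. Hence $\mathbb{C}v$ is a trivial subrepresentation of $\mathrm{Res}^{G_d}_{G_\mathcal{S}}\pi$; and since $\pi$ is a nontrivial irreducible of $G_d$ factoring through $PG_d$ (which for $d\geq2$ admits no nontrivial one-dimensional representations, so $\dim\pi\geq2$), this restriction is genuinely reducible.

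The converse is immediate: any unit vector $v$ spanning a trivial subrepresentation of $\mathrm{Res}^{G_d}_{G_\mathcal{S}}\pi$ satisfies $\pi(U)v=v$ for all $U\in\mathcal{S}\subset G_\mathcal{S}$, whence $T_{\nu_\mathcal{S},t,\pi}v=v$ and $\|T_{\nu_\mathcal{S},t,\pi}\|_\infty\geq 1$, with equality forced by unitarity. The only non-mechanical ingredient above is the equality case of the triangle inequality used to upgrade the \emph{a priori} purely analytical statement ``the operator norm equals one'' to the algebraic statement ``there is a genuine common fixed vector''; the remainder is Schur's lemma, continuity of $\pi$, and bookkeeping with (\ref{decomposition}).
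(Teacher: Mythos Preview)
Your proof is correct and follows essentially the same route as the paper's: reduce $\delta(t,\nu_\mathcal{S})=1$ to $\|T_{\nu_\mathcal{S},t,\pi}\|_\infty=1$ for some nontrivial $\pi$, then extract a common fixed unit vector using $I\in\mathcal{S}$. The only cosmetic difference is the mechanism for extracting the fixed vector: the paper passes through $T^\dagger T v=v$, i.e.\ $\sum_{g,h}\pi(gh^\dagger)v=|\mathcal{S}|^2v$, and argues via $|\langle v,\pi(gh^\dagger)v\rangle|\le 1$, whereas you invoke the equality case of the triangle inequality (strict convexity) directly on $\|\tfrac{1}{|\mathcal{S}|}\sum_U\pi(U)v\|=1$; your version also spells out the extension from $\mathcal{S}$ to $G_\mathcal{S}$ and why $\dim\pi\ge2$, points the paper leaves implicit.
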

\begin{proof}
The implication `$\Leftarrow$' is obvious. For `$\Rightarrow$' assume that $\delta(t,\nu_{\mathcal{S}})=1$. Then for some irreducible nontrivial representation $\pi$ appearing in the decomposition \eqref{decomposition} we have $\|T_{\nu_{\mathcal{S}},t,\pi}\|=1$. There exists vector of norm one, $v\in V_\pi$, such that $\sum_{g,h\in\mathcal{S}}\pi(gh^\dagger)v=|\mathcal{S}|^2v$. Hence, $\sum_{g,h\in\mathcal{S}}\bk{v}{\pi(gh^\dagger)v}=|\mathcal{S}|^2$. By the unitarity of $\pi(gh^\dagger)$ we have  $|\bk{v}{\pi(gh^\dagger)v}|\leq 1$, for any $v\in V_\pi$ of norm one. Thus $\forall\,g,h\in \mathcal{S}\,\,\pi(g)v=\pi(h)v$. Note, however, that under the assumption that $I\in \mathcal{S}$, this implies $\forall\,g\in \mathcal{S}\,\,\pi(g)v=v$ which means $v$ is a common eigenvector of all operators $\pi(g)$, $g\in\mathcal{S}$. This in turn means that $\mathrm{Res}^{G_d}_{G_\mathcal{S}}\pi$ is reducible and contains a copy of the trivial representation. This ends the proof.
\end{proof}

%ALTERNATIVE PROOF BEGIN

%\begin{proof}
%The implication `$\Leftarrow$' is obvious. For `$\Rightarrow$' assume that $\delta(t,\nu_{\mathcal{S}}) = 1$. Then for some irreducible nontrivial representation $\pi$ appearing in the decomposition \eqref{decomposition} we have $\|T_{\nu_{\mathcal{S}},t,\pi}\| = 1$, i.e. $\|\sum_{g\in\mathcal{S}}\pi(g)\| = |\mathcal{S}|$.
%By the unitarity of $\pi(g)$ this is a saturated triangle inequality, which holds iff there exists normalized to one vector $v \in V_\pi$ such that $\pi(g) v$ is the same $\forall\,g\in \mathcal{S}$.
%Since $I \in \mathcal{S}$, it follows that $\forall\,g\in \mathcal{S}\,\,\pi(g)v=v$, which means $v$ is a common eigenvector of all operators $\pi(g)$, $g\in\mathcal{S}$. This in turn means that $\mathrm{Res}^{G_d}_{G_\mathcal{S}}\pi$ is reducible and contains a copy of the trivial representation. This ends the proof.
%\end{proof}

%ALTERNATIVE PROOF END

%\begin{remark}
%In the language of the alternative definition of exact $t$-designs given by Eq.~\eqref{eq:altdef}, the result of Lemma~1 means that an arbitrary set of quantum gates $\mathcal{S}\subset G_d$ forms an exact $t$-design if and only if  there exists an operator $A$ such that $A =\int_{G_d} d\nu_S(U)
%    U^{\otimes t} A \bar{U}^{\otimes t} \ne  \int_{G_d} d\mu(U) U^{\otimes t} A \bar{U}^{\otimes t}$.
%\end{remark}

\begin{remark}\label{remark1b}
The following is an equivalent way of formulating Lemma~1 using  Eq.~\eqref{eq:altdef}: $\{ \mathcal{S}, \nu_{\mathcal{S}} \}$ is not a \hbox{$t$-design} iff  there exists an operator $A$ such that $A =\int_{G_d} d\nu_{\mathcal{S}}(U)
    U^{\otimes t} A (U^\dagger)^{\otimes t} \ne  \int_{G_d} d\mu(U) U^{\otimes t} A (U^\dagger)^{\otimes t}$.
\end{remark}

\begin{Corollary}\label{gap}
Assume $\mathcal{S}$ is universal. Then  $\delta(t,\nu_{\mathcal{S}})<1$ for any finite $t$. 
\end{Corollary}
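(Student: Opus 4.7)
The plan is to prove the contrapositive using Lemma~\ref{remark1}. Suppose, for contradiction, that $\delta(t,\nu_{\mathcal{S}})=1$ for some finite $t$; then by Lemma~\ref{remark1} there exists a nontrivial irreducible representation $\pi$ appearing in the decomposition (\ref{decomposition}) such that $\mathrm{Res}^{G_d}_{G_\mathcal{S}}\pi$ contains a copy of the trivial representation, i.e.\ there is a nonzero vector $v\in V_\pi$ with $\pi(g)v=v$ for every $g\in G_\mathcal{S}$. My goal is to show that this forces $\mathcal{S}$ to be non-universal.

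The key observation, already recorded in the paragraph following (\ref{decomposition}), is that every irreducible $\pi$ occurring in $U^{\otimes t}\otimes\bar{U}^{\otimes t}$ descends to an irreducible representation of the projective group $PG_d=G_d/\sim$, because multiplying $U$ by a global phase $e^{i\phi}$ leaves $U^{\otimes t}\otimes\bar{U}^{\otimes t}$ invariant. Consequently $\mathrm{Res}^{G_d}_{G_\mathcal{S}}\pi$ factors through the quotient $G_\mathcal{S}/\sim$, and the invariant vector $v$ is fixed by the image of $G_\mathcal{S}/\sim$ in $PG_d$.

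Now I would invoke the definition of universality: $\mathcal{S}$ universal means $G_\mathcal{S}/\sim\,=PG_d$. Therefore the previous step would yield a nonzero $v\in V_\pi$ fixed by all of $PG_d$, i.e.\ a trivial subrepresentation inside the nontrivial irreducible $\pi$ of $PG_d$. By Schur's lemma this is impossible, giving the desired contradiction.

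Essentially no step is a real obstacle here; the only point that requires a moment's care is that $G_\mathcal{S}$ is defined as a topological closure, so one should note that a vector fixed by $\langle\mathcal{S}\rangle$ is automatically fixed by $\mathrm{cl}(\langle\mathcal{S}\rangle)=G_\mathcal{S}$ by continuity of $\pi$, so Lemma~\ref{remark1} genuinely produces a $G_\mathcal{S}$-invariant vector and the identification $G_\mathcal{S}/\sim\,=PG_d$ can be applied without loss.
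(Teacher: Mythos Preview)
Your proposal is correct and is essentially the same argument as the paper's proof: both use that universality means $G_\mathcal{S}/\sim\,=PG_d$, so the restriction of any irreducible $\pi$ of $PG_d$ to $G_\mathcal{S}$ is still irreducible, which by Lemma~\ref{remark1} forces $\delta(t,\nu_\mathcal{S})<1$. The paper states this in one line; you spell out the contrapositive, invoke Schur's lemma explicitly, and add the continuity remark about the closure---all accurate elaborations, but not a different route.
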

\begin{proof}
When $\mathcal{S}$ is a universal set, $PG_d=G_{\mathcal{S}}/\sim$ and hence the restriction of any irreducible representation $\pi$ of $PG_d$ to $G_\mathcal{S}$ remains irreducible.
\end{proof}

%We note that the converse of Corollary \ref{gap} is in general false as  $\delta(t,\nu_{\mathcal{S}})<1$  implies only that the restriction to $G_\mathcal{S}$ of any nontrivial irreducible representation of $PG_d$ does not contain the trivial representation in its decomposition. 
Let us next define 
\begin{align*}
    &\mathcal{S}^{t_1,t_2}=\{U^{\otimes t_1}\otimes \bar{U}^{\otimes t_2}|U\in \mathcal{S}\},\, \\
    &G_d^{t_1,t_2}=\{U^{\otimes t_1}\otimes \bar{U}^{\otimes t_2}|U\in G_d\}.
\end{align*}
We will denote $\mathcal{S}^{t,0} $ by $\mathcal{S}^{t} $ and $G_d^{t,0}$ by $G_d^{t}$.
For any set of matrices $B\subset \mathcal{B}(\mathbb{C}^n)$ let 
\begin{gather}\label{centralizer}
\mathcal{C}(B)=\{X\in \mathcal{B}(\mathbb{C}^n)|\,[X,Y]=0,\,\forall Y\in B\}
\end{gather}
Using $U\otimes \bar{U}=\mathrm{Ad}_U\oplus 1$ we can rewrite Lemma 3.4 from \cite{Adam} as
\begin{lemma}\label{lema1}
Assume $\mathcal{S}$ is such that 
\begin{gather}\label{necessary}
\mathcal{C}(\mathcal{S}^{1,1})=\mathcal{C}( G_d^{1,1}).
\end{gather}
Then $\mathcal{S}$ is universal if and only if $G_\mathcal{S}$ is infinite. 
\end{lemma}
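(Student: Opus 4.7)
My plan is to treat the two directions of the biconditional separately. The forward implication ($\mathcal{S}$ universal $\Rightarrow G_\mathcal{S}$ infinite) is immediate: universality means $G_\mathcal{S}/\sim\,\simeq PG_d$, which is a positive-dimensional compact Lie group, forcing $G_\mathcal{S}$ itself to be infinite. The substance lies in the converse.

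For the converse, I would first translate the centralizer hypothesis into a representation-theoretic statement. Using $U\otimes\bar U=\mathrm{Ad}_U\oplus 1$, the algebra $\mathcal{C}(G_d^{1,1})$ is two-dimensional, spanned by the projectors onto the adjoint and trivial isotypic components. The equality $\mathcal{C}(\mathcal{S}^{1,1})=\mathcal{C}(G_d^{1,1})$ is therefore equivalent to saying that $\mathrm{Res}^{G_d}_{G_\mathcal{S}}\mathrm{Ad}$ remains irreducible on $\mathfrak{su}(d)$.

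Next I bring in the Lie algebra. Being closed in the compact Lie group $G_d$, the group $G_\mathcal{S}$ is itself a compact Lie group with Lie algebra $\mathfrak{g}_\mathcal{S}\subseteq\mathfrak{u}(d)$. The traceless projection $p:\mathfrak{u}(d)\to\mathfrak{su}(d)$ is an $\mathrm{Ad}_{G_\mathcal{S}}$-equivariant Lie-algebra homomorphism, so $p(\mathfrak{g}_\mathcal{S})$ is $\mathrm{Ad}$-invariant; by the irreducibility established above it equals either $0$ or $\mathfrak{su}(d)$. In the nontrivial case $p(\mathfrak{g}_\mathcal{S})=\mathfrak{su}(d)$, I combine three facts: commutators in $\mathfrak{u}(d)$ are automatically traceless, so $[\mathfrak{g}_\mathcal{S},\mathfrak{g}_\mathcal{S}]\subseteq\mathfrak{su}(d)$; $p$ is a Lie-algebra homomorphism; and $\mathfrak{su}(d)$ is simple, hence perfect, for $d\geq 2$. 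Together these give $\mathfrak{su}(d)=[\mathfrak{g}_\mathcal{S},\mathfrak{g}_\mathcal{S}]\subseteq\mathfrak{g}_\mathcal{S}$. Exponentiating, $SU(d)\subseteq G_\mathcal{S}$, whence $PG_d=PSU(d)\subseteq G_\mathcal{S}/\sim$, so $\mathcal{S}$ is universal.

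The main obstacle is the alternative case $p(\mathfrak{g}_\mathcal{S})=0$, in which $\mathfrak{g}_\mathcal{S}\subseteq i\mathbb{R}\,I$ and the image of $G_\mathcal{S}$ in $PG_d$ would have zero-dimensional Lie algebra and hence be a discrete, thus finite, subgroup of the compact $PG_d$. Excluding this possibility is precisely where the hypothesis that $G_\mathcal{S}$ be infinite must be used carefully (in effect ensuring that $G_\mathcal{S}$ has a non-scalar connected component, as in the original Lemma~3.4 of \cite{Adam}); this is the one nontrivial input in the argument, with every other step being a direct application of Schur's lemma for the adjoint representation and of the decomposition $\mathfrak{u}(d)=\mathfrak{su}(d)\oplus i\mathbb{R}\,I$.
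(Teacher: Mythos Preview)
The paper does not give a proof of this lemma; it simply records it as a restatement, via $U\otimes\bar U=\mathrm{Ad}_U\oplus 1$, of Lemma~3.4 of \cite{Adam}. Your outline is therefore strictly more detailed than what the paper offers, and the Lie-theoretic argument you sketch is correct: the centralizer hypothesis is equivalent to irreducibility of $\mathrm{Res}^{G_d}_{G_\mathcal{S}}\mathrm{Ad}$ on $\mathfrak{su}(d)$, which forces $p(\mathfrak{g}_\mathcal{S})\in\{0,\mathfrak{su}(d)\}$, and in the second case perfectness of $\mathfrak{su}(d)$ together with $[\mathfrak{g}_\mathcal{S},\mathfrak{g}_\mathcal{S}]\subseteq\mathfrak{su}(d)$ gives $\mathfrak{su}(d)\subseteq\mathfrak{g}_\mathcal{S}$ and hence universality. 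The one step you flag as nontrivial is genuinely the delicate point: in the $U(d)$ setting of the present paper, an infinite $G_\mathcal{S}$ can in principle have all of its ``infiniteness'' in the scalar direction $i\mathbb{R}I$ (think of $U(1)\cdot H$ with $H$ a finite unitary $2$-group), so the alternative $p(\mathfrak{g}_\mathcal{S})=0$ is not excluded by the bare hypothesis. In \cite{Adam} the ambient group is $SU(d)$, where $\mathfrak{g}_\mathcal{S}\subseteq\mathfrak{su}(d)$ and ``$G_\mathcal{S}$ infinite'' immediately yields $\mathfrak{g}_\mathcal{S}\neq 0$, so the issue does not arise. Since both you and the paper ultimately defer this point to \cite{Adam}, your treatment is at least as complete as the paper's own.
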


In other words, whenever the condition  (\ref{necessary}) is satisfied, $\mathcal{S}$ is either universal or $G_\mathcal{S}$ is a finite subgroup of $G_d$. In what follows we will call the condition (\ref{necessary}) the necessary universality condition. We note also that we always have $\mathcal{C}(\mathcal{S}^{t,t})=\mathcal{C}(G_\mathcal{S}^{t,t})$. Lemma \ref{lema1} implies that finite groups satisfying the necessary universality condition play a central role in deciding whether $\mathcal{S}$ is universal. 

\begin{definition}
A finite subgroup $G\subset G_d$ is a unitary $t$-group iff $\delta(t,\nu_{G})=0$, where $\nu_G$ is the uniform measure on $G$.
\end{definition}

The concept of unitary $t$-groups will play a central role in what follows.

\begin{lemma}\label{t-group}
Assume that $\mathcal{S}$ is a generating set of a finite subgroup of $G_d$. Then for any $t$, either $\delta(t,\nu_{G_\mathcal{S}})=0$ or $\delta(t,\nu_{G_\mathcal{S}})=1$. Moreover, for any $t$ (1) either $\delta(t,\nu_\mathcal{S})<1$ iff  $\delta(t,\nu_{G_\mathcal{S}})=0$ or (2) $\delta(t,\nu_\mathcal{S})=1$ iff $\delta(t,\nu_{G_\mathcal{S}})=1$.
\end{lemma}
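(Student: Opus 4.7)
The plan is to analyze the moment operator block by block on the isotypic decomposition (\ref{decomposition}) and to apply Lemma~\ref{remark1} to both measures $\nu_{\mathcal{S}}$ and $\nu_{G_\mathcal{S}}$.

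First I would establish the dichotomy $\delta(t,\nu_{G_\mathcal{S}})\in\{0,1\}$. Because $G_\mathcal{S}$ is finite, $\nu_{G_\mathcal{S}}$ is the normalized Haar measure on $G_\mathcal{S}$, so for every nontrivial irreducible $\pi$ appearing in the decomposition (\ref{decomposition}) the block
\begin{equation}
T_{\nu_{G_\mathcal{S}},t,\pi}=\frac{1}{|G_\mathcal{S}|}\sum_{g\in G_\mathcal{S}}\pi(g)
\end{equation}
is the orthogonal projector onto the $G_\mathcal{S}$-invariant subspace of $V_\pi$. Its operator norm is therefore $1$ when $\mathrm{Res}^{G_d}_{G_\mathcal{S}}\pi$ contains a copy of the trivial representation and $0$ otherwise. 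Since $\delta(t,\nu_{G_\mathcal{S}})=\sup_\pi\|T_{\nu_{G_\mathcal{S}},t,\pi}\|_\infty$, the supremum again lies in $\{0,1\}$.

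Next I would prove clause (2), the equivalence $\delta(t,\nu_\mathcal{S})=1\Leftrightarrow\delta(t,\nu_{G_\mathcal{S}})=1$. The direction $\Leftarrow$ is immediate from Lemma~\ref{remark1}: a $G_\mathcal{S}$-invariant vector in a nontrivial $\pi$ is in particular invariant under $\mathcal{S}\subset G_\mathcal{S}$. For $\Rightarrow$, Lemma~\ref{remark1} produces a nontrivial $\pi$ and a unit vector $v\in V_\pi$ with $\pi(g)v=v$ for every $g\in\mathcal{S}$; since $\mathcal{S}$ generates $G_\mathcal{S}$ as a group, $v$ is also fixed by every element of $G_\mathcal{S}$, so the same $\pi$ witnesses $\delta(t,\nu_{G_\mathcal{S}})=1$. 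Clause (1), $\delta(t,\nu_\mathcal{S})<1\Leftrightarrow\delta(t,\nu_{G_\mathcal{S}})=0$, then follows from clause (2) combined with the dichotomy already established, by contraposition.

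The only delicate point is the direction $\delta(t,\nu_\mathcal{S})=1\Rightarrow\delta(t,\nu_{G_\mathcal{S}})=1$: one must upgrade $\mathcal{S}$-invariance of the vector $v$ produced by Lemma~\ref{remark1} to full $G_\mathcal{S}$-invariance. This is where finiteness of $G_\mathcal{S}$ enters essentially, since then every element of $G_\mathcal{S}$ is a finite word in $\mathcal{S}$ (no inverses needed, as $g^{-1}=g^{|G_\mathcal{S}|-1}$), and $\pi(g)v=v$ is preserved under products. No closure arguments are required, and the proof avoids ever working with the infinite-group case.
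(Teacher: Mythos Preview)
Your argument is correct, but it follows a different route from the paper's. The paper proves the dichotomy via convolution: one has $\delta(t,\nu^{\ast l})=\delta(t,\nu)^l$, and for finite $G_\mathcal{S}$ the measures $\nu_\mathcal{S}^{\ast l}$ converge to the Haar measure $\nu_{G_\mathcal{S}}$ (citing \cite{Varju}); since $\nu_{G_\mathcal{S}}\ast\nu_{G_\mathcal{S}}=\nu_{G_\mathcal{S}}$, one gets $\delta(t,\nu_{G_\mathcal{S}})^2=\delta(t,\nu_{G_\mathcal{S}})$ and the limit relation $\delta(t,\nu_\mathcal{S})^l\to\delta(t,\nu_{G_\mathcal{S}})$ simultaneously yields both the dichotomy and the equivalences (1)--(2). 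Your proof instead works blockwise, identifying $T_{\nu_{G_\mathcal{S}},t,\pi}$ as the projector onto $G_\mathcal{S}$-invariants and then invoking Lemma~\ref{remark1} for both measures to match up the cases. Your approach is more elementary---it needs no external convergence result and no convolution identity, only the standard averaging projector and the group-generation argument---whereas the paper's version is shorter and additionally exhibits the quantitative relation $\delta(t,\nu_\mathcal{S})^l\to\delta(t,\nu_{G_\mathcal{S}})$. One minor remark: your claim that finiteness ``enters essentially'' in the step $\delta(t,\nu_\mathcal{S})=1\Rightarrow\delta(t,\nu_{G_\mathcal{S}})=1$ is a slight overstatement, since $\mathcal{S}$-invariance propagates to $\langle\mathcal{S}\rangle$ regardless, and then to the closure by continuity of $\pi$; finiteness merely makes the closure step vacuous.
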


\begin{proof}
One can easily verify that for any measure $\nu$ $\delta(t,{\nu}^{\ast l})=\delta(t,\nu)^l$, where $\nu^{\ast l}$ is the $l$-fold convolution of measure $\nu$. Under the assumption that $G_\mathcal{S}$ is finite there is $l_0$ such that $\mathcal{S}_{l_{0}}=G_\mathcal{S}$. On the other hand, it is known \cite{Varju} that for $l\rightarrow \infty$ the measure $\nu_{\mathcal{S}}^{\ast l}$ converges to $\nu_{G_{\mathcal{S}}}$. Thus we have $\delta(t,\nu_\mathcal{S})^l\rightarrow \delta(t,\nu_{G_\mathcal{S}})$. Note, however, that $\nu_{G_\mathcal{S}}=\nu_{G_\mathcal{S}}^{\ast 2}$. Thus  $\delta(t,\nu_{G_\mathcal{S}})=0$ or $\delta(t,\nu_{G_\mathcal{S}})=1$. The result follows.
\end{proof}

Recently there has been some development in the theory of unitary $t$-groups. The main result of \cite{Bannai20} states: 

\begin{fact} \label{thmt-gr}
There are no  finite unitary $t$-groups in $G_d$ for $t\geq 6$ and  $d\geq 2$. Moreover 
\begin{enumerate}
    \item When $d=2$ there is unitary $5$-group but no unitary $t$-group with $t\geq 6$.
    \item When $d\geq 3$ there is no unitary $t$-group with $t\geq 4$.
\end{enumerate}
\end{fact}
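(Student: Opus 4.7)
Since Fact~\ref{thmt-gr} is quoted from the recent work \cite{Bannai20}, my plan is to reconstruct the strategy one would use to establish such a classification rather than to offer a wholly independent argument. The starting point is to recast the unitary $t$-group property in representation-theoretic language: by Lemma~\ref{remark1} applied with $\mathcal{S}=G$, a finite subgroup $G\subset G_d$ is a unitary $t$-group if and only if for every nontrivial irreducible representation $\pi$ appearing in the decomposition (\ref{decomposition}), the restriction $\mathrm{Res}^{G_d}_G\pi$ contains no copy of the trivial representation. Equivalently, $\dim\mathcal{C}(G^{t,t})=\dim\mathcal{C}(G_d^{t,t})$, and the right-hand side is computable via Schur--Weyl duality; for $d=2$ it equals the Catalan number $C_t=\frac{1}{t+1}\binom{2t}{t}$.

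For $d=2$, I would invoke the complete ADE classification of finite subgroups of $SU(2)$ (cyclic, binary dihedral, binary tetrahedral, binary octahedral, and binary icosahedral groups), together with their central extensions in $U(2)$. For each candidate $G$ the generating function $\sum_{t\geq 0}\dim\bigl((V\otimes \bar V)^{\otimes t}\bigr)^{G}z^{t}$ is a rational function that can be extracted from the character table via Molien's formula, and compared termwise with $\sum_{t\geq 0}C_{t}z^{t}$. A direct computation shows that only the binary icosahedral group $2I$ matches the Catalan sequence through $t=5$, and that $2I$ fails at $t=6$; no other subgroup does better. This yields part~(1).

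For $d\geq 3$ the argument is considerably harder. One first notes that any finite unitary $4$-group trivially satisfies the necessary universality condition (\ref{necessary}) and so must act irreducibly on $\mathbb{C}^{d}$, with absolutely irreducible image in $PG_d$. Combining this with the classification of finite primitive subgroups of $PU(d)$ (Blichfeldt and Feit for small $d$, Collins for larger $d$, together with Jordan-type bounds on the index of an abelian normal subgroup), one examines each family in turn. For every candidate one computes $\dim\mathcal{C}(G^{2,2})$ from the character table and exhibits an extra $G$-invariant in the symmetric or antisymmetric component of $V^{\otimes 2}\otimes \bar V^{\otimes 2}$, showing $\dim\mathcal{C}(G^{2,2})>\dim\mathcal{C}(G_d^{2,2})$. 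This rules out unitary $4$-groups, and hence unitary $t$-groups for $t\geq 4$, whenever $d\geq 3$.

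The principal obstacle is producing a uniform argument across all $d\geq 3$: the underlying classification of finite subgroups of $PU(d)$ already requires substantial case analysis, and tightening it so that no family escapes the $4$-design test relies on delicate character-theoretic bounds. The Bannai--Navarro--Rizo--Tiep proof blends the two ingredients, and it is exactly this case-by-case rule-out, handled uniformly in $d$, that constitutes the real technical work.
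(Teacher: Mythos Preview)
The paper does not prove Fact~\ref{thmt-gr} at all: it is stated as ``the main result of \cite{Bannai20}'' and used as a black box. There is therefore no proof in the paper to compare your proposal against. What you have written is a plausible sketch of how the cited result is obtained, and it is appropriate that you flag it as a reconstruction rather than an independent proof.

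Two remarks on the sketch itself. First, in the $d\geq 3$ paragraph there is a slip: exhibiting an extra $G$-invariant vector in $V^{\otimes 2}\otimes\bar V^{\otimes 2}$ would show that $G$ fails to be a unitary $2$-group, not a $4$-group; since Clifford groups are $3$-designs, no such extra invariant exists for them. What you need (and what your conclusion ``$\dim\mathcal{C}(G^{2,2})>\dim\mathcal{C}(G_d^{2,2})$'' actually expresses, via Lemma~\ref{lem:par_transpose}) is an extra element of the \emph{commutant} of $G$ acting on $V^{\otimes 2}\otimes\bar V^{\otimes 2}$, equivalently an extra $G$-invariant in $V^{\otimes 4}\otimes\bar V^{\otimes 4}$. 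Second, the actual argument in \cite{Bannai20} relies on the classification of finite simple groups rather than merely the Blichfeldt--Collins-type lists for primitive subgroups of $PU(d)$; the uniform treatment across all $d\geq 3$ really does go through CFSG-based character estimates, so your phrase ``delicate character-theoretic bounds'' understates the input. The $d=2$ outline via the ADE list and Molien series is accurate.
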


This lead to our first main result.

\begin{theorem}\label{main}
Let $\mathcal{S}$ be a set of gates in $G_d$ such that $\mathcal{C}(\mathcal{S}^{1,1})=\mathcal{C}( G_d^{1,1})$. Then $\mathcal{S}$ is universal if and only if
\begin{enumerate}
    \item $\{\mathcal{S},\nu_{\mathcal{S}}\}$ is a $\delta$-approximate $6$-design with $\delta<1$, when $d=2$.
    \item $\{\mathcal{S},\nu_{\mathcal{S}}\}$ is a $\delta$-approximate $4$-design with $\delta<1$, when $d\geq 3$.
\end{enumerate}

\end{theorem}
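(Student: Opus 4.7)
The proof I have in mind splits cleanly into the two implications. The forward direction ($\mathcal{S}$ universal $\Rightarrow$ $\delta$-approximate $t(d)$-design with $\delta<1$) is essentially immediate from Corollary~\ref{gap}, which asserts $\delta(t,\nu_\mathcal{S})<1$ for every finite $t$ whenever $\mathcal{S}$ is universal; specializing to $t=6$ for $d=2$ and $t=4$ for $d\geq 3$ finishes this direction, and the centralizer hypothesis is not even needed here.

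For the converse I would argue by contrapositive. Suppose $\mathcal{S}$ satisfies the necessary universality condition $\mathcal{C}(\mathcal{S}^{1,1})=\mathcal{C}(G_d^{1,1})$ but is \emph{not} universal. Lemma~\ref{lema1} then forces $G_\mathcal{S}$ to be a finite subgroup of $G_d$. Now suppose towards a contradiction that $\{\mathcal{S},\nu_\mathcal{S}\}$ is a $\delta$-approximate $t(d)$-design with $\delta<1$, i.e.\ $\delta(t(d),\nu_\mathcal{S})<1$. Applying Lemma~\ref{t-group} to the finite group $G_\mathcal{S}$ promotes this strict inequality to $\delta(t(d),\nu_{G_\mathcal{S}})=0$, so $G_\mathcal{S}$ is a unitary $t(d)$-group.

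At this point I would invoke Fact~\ref{thmt-gr}: for $d=2$ there is no unitary $t$-group with $t\geq 6$, and for $d\geq 3$ there is no unitary $t$-group with $t\geq 4$. The choices $t(2)=6$ and $t(d)=4$ in the theorem are precisely calibrated to these thresholds, so the existence of $G_\mathcal{S}$ as a unitary $t(d)$-group is ruled out, delivering the contradiction. The only real point to get right is the threading of the three imported results: Lemma~\ref{lema1} reduces universality to infiniteness of $G_\mathcal{S}$, Lemma~\ref{t-group} transfers the approximate-design property from $\mathcal{S}$ to its generated group, and the Bannai classification closes the argument at the minimal possible value of $t(d)$. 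No genuine obstacle arises beyond correctly matching the hypotheses of each step; all the substantive mathematical content lives in the cited results, and the theorem itself is their synthesis.
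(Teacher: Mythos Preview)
Your proposal is correct and follows essentially the same approach as the paper's proof: both use Lemma~\ref{lema1} to reduce non-universality (under the centralizer hypothesis) to finiteness of $G_\mathcal{S}$, invoke Corollary~\ref{gap} for the forward direction, and combine Lemma~\ref{t-group} with Fact~\ref{thmt-gr} to derive $\delta(t(d),\nu_\mathcal{S})=1$ in the finite case. The only cosmetic difference is that you phrase the converse as a contrapositive-plus-contradiction, while the paper presents it as a direct dichotomy, but the logical content is identical.
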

\begin{proof}
By Lemma \ref{lema1} the set $\mathcal{S}$ can be either universal or $G_\mathcal{S}$ is a finite group. If $\mathcal{S}$ is universal then by Corollary \ref{gap} we have that $\delta(t,\nu_{\mathcal{S}})< 1$ for any $t\geq 1$. On the other hand if $G_\mathcal{S}$ is a finite group then by Lemma \ref{t-group} and  Fact \ref{thmt-gr} we can only have: (1) $\delta(t,\nu_{\mathcal{S}})=1$ for $t=6$ and $d=2$, (2) $\delta(t,\nu_{\mathcal{S}})=1$ for $t=4$ and $d>2$. This finishes the proof. 
\end{proof}
In order to state our second universality criterion we will need the following lemma.
\begin{lemma}\label{lema2}
Assume that for some $t\geq 2$
\begin{gather}\label{c1}
   \mathcal{C}(\mathcal{S}^{t,t})=\mathcal{C}( G_d^{t,t}).
\end{gather}
Then any irreducible representation occurring in the decomposition (\ref{decomposition}) remains irreducible when restricted to $G_\mathcal{S}$. Moreover  $\mathcal{C}(\mathcal{S}^{1,1})=\mathcal{C}( G_d^{1,1})$.
\end{lemma}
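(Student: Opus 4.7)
The plan is to convert the hypothesized centralizer equality into a dimension equality, and then extract irreducibility of the restrictions from a Schur-style multiplicity count. First, I would recall from the remark just before the lemma that $\mathcal{C}(\mathcal{S}^{t,t})=\mathcal{C}(G_\mathcal{S}^{t,t})$, so the hypothesis reads $\mathcal{C}(G_\mathcal{S}^{t,t})=\mathcal{C}(G_d^{t,t})$. Since $G_\mathcal{S}\subseteq G_d$ gives the automatic inclusion $\mathcal{C}(G_d^{t,t})\subseteq \mathcal{C}(G_\mathcal{S}^{t,t})$, this equality is equivalent to the equality of their (finite) dimensions.

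Next I would double-decompose the representation: write (\ref{decomposition}) as $U^{\otimes t}\otimes \bar{U}^{\otimes t}\simeq \bigoplus_\pi \pi^{\oplus m_\pi}$ for $G_d$-irreducibles $\pi$, and further $\mathrm{Res}^{G_d}_{G_\mathcal{S}}\pi \simeq \bigoplus_\sigma \sigma^{\oplus n_{\pi,\sigma}}$ for $G_\mathcal{S}$-irreducibles $\sigma$. By Schur's lemma,
\[
\dim\mathcal{C}(G_d^{t,t})=\sum_\pi m_\pi^2, \qquad \dim \mathcal{C}(G_\mathcal{S}^{t,t})=\sum_\sigma\Bigl(\sum_\pi m_\pi\, n_{\pi,\sigma}\Bigr)^{\!2}.
\]
Expanding the right-hand square and exchanging the order of summation splits it as $\sum_\pi m_\pi^2 \sum_\sigma n_{\pi,\sigma}^2$ plus a manifestly non-negative cross-term. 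Since $\sum_\sigma n_{\pi,\sigma}^2\geq 1$ for every $\pi$ that actually appears, with equality precisely when $\mathrm{Res}^{G_d}_{G_\mathcal{S}}\pi$ is irreducible, the dimension equality forces $\sum_\sigma n_{\pi,\sigma}^2=1$ for every $\pi$ in the decomposition, which is the first conclusion.

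For the \emph{moreover} part, I would observe that when $t\geq 2$, expanding $(U\otimes\bar{U})^{\otimes t}=(\mathrm{Ad}_U\oplus 1)^{\otimes t}$ shows the adjoint $\mathrm{Ad}$ itself appears among the $\pi$'s in (\ref{decomposition}) (take $\mathrm{Ad}$ in one tensor slot and the trivial factor in the others). By the first part $\mathrm{Res}^{G_d}_{G_\mathcal{S}}\mathrm{Ad}$ is therefore irreducible, and since $\dim\mathrm{Ad}=d^2-1\neq 1$ it is not isomorphic to the trivial representation. Hence the decomposition $U\otimes\bar{U}\simeq \mathrm{Ad}\oplus 1$ restricts to a sum of two distinct $G_\mathcal{S}$-irreducibles, yielding $\dim\mathcal{C}(\mathcal{S}^{1,1})=1+1=2=\dim\mathcal{C}(G_d^{1,1})$; together with the automatic inclusion this gives $\mathcal{C}(\mathcal{S}^{1,1})=\mathcal{C}(G_d^{1,1})$.

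The main obstacle is isolating the irreducibility conclusion from a bare dimension comparison: one must notice that the difference between the two centralizer dimensions decomposes cleanly into a sum of non-negative quantities, each bounded below by an integer that vanishes exactly on the desired irreducibility condition, so tight equality forces all of them to vanish simultaneously. Once this observation is in place, the rest of the argument is largely bookkeeping of multiplicities.
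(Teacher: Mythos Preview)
Your argument is correct and follows the same route as the paper. The paper declares the first assertion ``obvious'' and, for the second, exhibits $U\otimes\bar U$ as a direct summand of $U^{\otimes t}\otimes\bar U^{\otimes t}$ exactly as you do; your Schur-based dimension comparison simply spells out in full the multiplicity bookkeeping that the paper leaves implicit, and your extra remark that $\dim\mathrm{Ad}=d^2-1\neq 1$ (so $\mathrm{Res}\,\mathrm{Ad}\not\simeq 1$) is a detail the paper skips but which is indeed needed to conclude $\dim\mathcal{C}(\mathcal{S}^{1,1})=2$.
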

\begin{proof}
The fact that any irreducible representation occurring in the decomposition (\ref{decomposition}) remains irreducible when restricted to $G_\mathcal{S}$ is obvious. For the second part of the statement note that 
\begin{gather}
    U^{\otimes t}\otimes\bar{U}^{\otimes t}\simeq \left (U^{\otimes t-1}\otimes\bar{U}^{\otimes t-1}\right)\otimes \left(\mathrm{Ad}_U\oplus 1\right)=\\\nonumber=\left(\left (U^{\otimes t-1}\otimes\bar{U}^{\otimes t-1}\right)\otimes \mathrm{Ad}_U\right)\oplus \left (U^{\otimes t-1}\otimes\bar{U}^{\otimes t-1}\right).
\end{gather}
Repeating this decomposition we get that $U\otimes \bar{U}$ is one of the summands in the decomposition of $ U^{\otimes t}\otimes\bar{U}^{\otimes t}$. Hence, under condition (\ref{c1}) we have $\mathrm{Res}^{G_d}_{G_\mathcal{S}}\mathrm{Ad}$ is irreducible. Thus $\mathcal{C}(\mathcal{S}^{1,1})=\mathcal{C}( G_d^{1,1})$.
\end{proof}
We can now formulate a sufficient condition for universality in terms of centralizers.
\begin{Corollary}\label{main2}
Let $\mathcal{S}$ be a set of gates in $G_d$. Then $\mathcal{S}$ is universal iff for $t(2)=6$ and $t(d)=4$ for $d>2$ we have
\begin{gather}\label{criterion1}
\mathcal{C}(\mathcal{S}^{t(d),t(d)})=\mathcal{C}( G_d^{t(d),t(d)}).
\end{gather}
\end{Corollary}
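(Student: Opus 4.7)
The plan is to prove the two implications of Corollary~\ref{main2} separately, assembling the lemmas above into two short logical chains.

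For the direction $(\Rightarrow)$, I would assume $\mathcal{S}$ is universal, i.e., $G_\mathcal{S}/{\sim}\simeq PG_d$, and exploit the continuity together with the phase-invariance of the map $U\mapsto U^{\otimes t}\otimes\bar{U}^{\otimes t}$ (global phases cancel between the $U^{\otimes t}$ and $\bar{U}^{\otimes t}$ factors). Because this map factors through $PG_d$, the closure of $\langle\mathcal{S}^{t(d),t(d)}\rangle$ coincides with all of $G_d^{t(d),t(d)}$. Since commuting with a set is a closed condition, this gives $\mathcal{C}(\mathcal{S}^{t(d),t(d)})=\mathcal{C}(G_d^{t(d),t(d)})$; the reverse inclusion is automatic from $\mathcal{S}\subset G_d$.

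For the direction $(\Leftarrow)$, I would apply Lemma~\ref{lema2} to the hypothesis \eqref{criterion1}, which (i) delivers the necessary universality condition $\mathcal{C}(\mathcal{S}^{1,1})=\mathcal{C}(G_d^{1,1})$ and (ii) ensures that every irreducible representation appearing in the decomposition \eqref{decomposition} at $t=t(d)$ remains irreducible when restricted to $G_\mathcal{S}$. Property (ii) in particular precludes any nontrivial irreducible from acquiring a copy of the trivial representation upon restriction, so the contrapositive of Lemma~\ref{remark1} yields $\delta(t(d),\nu_\mathcal{S})<1$. Combined with (i), Theorem~\ref{main} forces $\mathcal{S}$ to be universal.

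The only substantive insight is recognizing that the single centralizer equality \eqref{criterion1} at $t=t(d)$ encodes both ingredients required by Theorem~\ref{main}: it simultaneously implies the necessary universality condition (via the tensor-factorization observation of Lemma~\ref{lema2}) and rules out the trivial representation appearing inside any nontrivial irreducible after restriction to $G_\mathcal{S}$. Once this is in place, the rest is mechanical, and I do not anticipate a genuine obstacle beyond carefully tracking the role played by each of Lemma~\ref{remark1}, Lemma~\ref{lema2}, and Theorem~\ref{main}.
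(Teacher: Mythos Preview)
Your proposal is correct and follows essentially the same route as the paper. For the $(\Leftarrow)$ direction you use Lemma~\ref{lema2} to extract the necessary universality condition and irreducibility upon restriction, then Lemma~\ref{remark1} to get $\delta(t(d),\nu_\mathcal{S})<1$, and finally Theorem~\ref{main} to conclude universality; the paper does the same but, instead of invoking Theorem~\ref{main}, re-runs its short argument directly (Lemma~\ref{t-group} and Fact~\ref{thmt-gr} force $\delta=1$ if $G_\mathcal{S}$ were finite, contradicting $\delta<1$). Your explicit treatment of the $(\Rightarrow)$ direction via phase-invariance and closure is a welcome addition, as the paper leaves that implication implicit.
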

\begin{proof}
The condition \eqref{criterion1} combined with Lemma \ref{lema2} implies that the necessary condition for universality is satisfied and that all irreducible representations occurring in the decomposition (\ref{decomposition}) remain irreducible when restricted to $G_\mathcal{S}$, where $t(d)$ is as in the statement of the theorem. Hence by Lemma \ref{remark1} we have $\delta(6,\nu_{\mathcal{S}})\neq 1$ for $d=2$ and $\delta(4,\nu_{\mathcal{S}})\neq 1$ for $d>2$. Assume $G_\mathcal{S}$ is a finite group. Then by  Lemma \ref{t-group}  and Fact \ref{thmt-gr} we have $\delta(6,\nu_{\mathcal{S}})=1$ for $d=2$ and $\delta(4,\nu_{\mathcal{S}})=1  $ for $d>2$. Thus we get a contradiction and $\mathcal{S}$ is universal. 
\end{proof}
Corollary~\ref{main2} can be further improved. For this, we will use a technique  connecting $\mathcal{C}(\mathcal{S}^{t_1,t_2})$ with $\mathcal{C}(\mathcal{S}^{t_1-n,t_2+n})$ through a partial transpose map. To be more concrete, we will use the following lemma  
\begin{lemma} \label{lem:par_transpose}
Let $\mathcal{S}$ be a set of gates in $G_d$, and let $\theta$ denote the transposition operator on $\mathcal{B}(\mathbb{C}^d)$. Then for any non-negative integer number $n \le t$, we have that 
\begin{gather}
{\rm{id}}^{\otimes (t{-}n)} \otimes \theta^{n}   (\mathcal{C}(\mathcal{S}^{t}))=\mathcal{C}(\mathcal{S}^{t-n,n}).
\end{gather}
In particular, the dimensions of $\mathcal{C}(\mathcal{S}^{t})$ and  $\mathcal{C}(\mathcal{S}^{t-n,n})$ are equal.
\end{lemma}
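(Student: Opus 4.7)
The plan is to exhibit the map $\theta_n := \mathrm{id}^{\otimes(t-n)} \otimes \theta^{\otimes n}$ as a linear involution on $\mathcal{B}((\mathbb{C}^d)^{\otimes t})$ that carries $\mathcal{C}(\mathcal{S}^t)$ bijectively onto $\mathcal{C}(\mathcal{S}^{t-n,n})$; the equality of dimensions is then immediate, and the stated set equality follows.

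The key algebraic tool is a pair of identities that I would verify by a direct computation on elementary tensors $X = \sum_\alpha C_\alpha \otimes D_\alpha$. For any operators $A$ on the first $t-n$ factors and $B$ on the last $n$,
\begin{align*}
\theta_n\bigl(X(A\otimes B)\bigr) &= (I\otimes B^T)\,\theta_n(X)\,(A\otimes I),\\
\theta_n\bigl((A\otimes B)X\bigr) &= (A\otimes I)\,\theta_n(X)\,(I\otimes B^T).
\end{align*}
These reflect the fact that the partial transpose acts covariantly on the untransposed factors and contravariantly (i.e., as an anti-homomorphism) on the transposed ones.

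Next I would apply this with $A = U^{\otimes(t-n)}$ and $B = U^{\otimes n}$. The assumption $[X, U^{\otimes t}] = 0$ forces the right-hand sides of the two identities to coincide; after rearranging (using invertibility of $A$) this yields $[\theta_n(X),\, A^{-1}\otimes B^T] = 0$. For unitary $U$ one has $U^T = \overline{U^{-1}}$, so writing $V := U^{-1}$ gives $A^{-1}\otimes B^T = V^{\otimes(t-n)}\otimes \bar V^{\otimes n}$. Hence $\theta_n(X)$ commutes with $V^{\otimes(t-n)}\otimes \bar V^{\otimes n}$ for every $V \in \mathcal{S}^{-1}$. Since the centralizer of any invertible matrix coincides with that of its inverse, $\mathcal{C}((\mathcal{S}^{-1})^{t-n,n}) = \mathcal{C}(\mathcal{S}^{t-n,n})$, which gives $\theta_n(\mathcal{C}(\mathcal{S}^t)) \subseteq \mathcal{C}(\mathcal{S}^{t-n,n})$. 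The reverse inclusion is obtained by applying the same identities to $Y \in \mathcal{C}(\mathcal{S}^{t-n,n})$ with $B' := \bar U^{\otimes n}$ in place of $B$ and using $(\bar U)^T = U^\dagger = U^{-1}$ to land back in $\mathcal{C}((\mathcal{S}^{-1})^t) = \mathcal{C}(\mathcal{S}^t)$.

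The main obstacle is purely bookkeeping: the algebraic identity naturally outputs $A^{-1}\otimes B^T$ rather than $A\otimes \bar B$, and one must reconcile this with the definition of $\mathcal{S}^{t-n,n}$ through the unitary identity $U^T = \overline{U^{-1}}$ together with the invariance of centralizers under inversion of their defining set. Once this step is made, the rest of the argument is mechanical, and the dimension statement follows at once from the linear bijectivity of $\theta_n$.
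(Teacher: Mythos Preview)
Your proof is correct and follows essentially the same approach as the paper: both establish that the partial transpose intertwines the conjugation actions of $\mathcal{S}^{t}$ and $\mathcal{S}^{t-n,n}$, verified by a direct computation on elementary tensors, and then use that $\theta_n$ is an involution for the reverse inclusion. The paper's version is marginally more streamlined in that it computes the full adjoint action in one stroke, obtaining $(U^{\otimes m}\otimes\bar U^{\otimes n})\,\theta_n(X)\,(U^{\otimes m}\otimes\bar U^{\otimes n})^\dagger = \theta_n\bigl(U^{\otimes t} X (U^{\dagger})^{\otimes t}\bigr)$ directly, so it lands in $\mathcal{C}(\mathcal{S}^{t-n,n})$ without your detour through $A^{-1}\otimes B^T$, the substitution $V=U^{-1}$, and the observation that centralizers are invariant under inversion of the defining set.
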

\begin{proof}
Let $X = \sum_i X^{1}_i \otimes X^{2}_i \cdots \otimes  X^{t}_i$  be an element of $ \mathcal{C}(\mathcal{S}^{t})$ (note that the upper index is indeed an index, not an exponent). Using the notations $m=t-n$ and $B^T=\theta(B)$ (and noting that $U^T=\bar{U}^\dagger$ for any unitary), we calculate the adjoint action of an arbitrary $U^{\otimes m} \otimes \bar{U}^{\otimes n} \in \mathcal{S}^{m,n}$ (with $U \in \mathcal{S}$) on ${\rm{id}}^{\otimes m} \otimes \theta^{n} (X) $:
\begin{align}
  &   U^{\otimes m} \otimes \bar{U}^{\otimes n} \, ({\rm{id}}^{\otimes m} \otimes \theta^{n} (X)) \,  (U^{\otimes m} \otimes \bar{U}^{\otimes n})^\dagger = \nonumber\\ 
  %  &    {\textstyle \sum_i} [U^{\otimes m} (X^{1}_i {\otimes} {\cdot} {\cdot} {\cdot} X^{m}_i )(U^\dagger)^{\otimes m}] {\otimes} (U^\dagger)^T (X^{m+1}_i)^T U^T {\otimes} {\cdot} {\cdot} {\cdot} \bar{U} (X_i^{t})^T  \bar{U}^\dagger {=} \nonumber \\ 
 % &    {\textstyle \sum_i} U X^{1}_i U^\dagger {\otimes} {\cdot} {\cdot} {\cdot} U X^{m}_i U^\dagger {\otimes} (U^\dagger)^T (X^{m+1}_i)^T U^T {\otimes} {\cdot} {\cdot} {\cdot} \bar{U} (X_i^{t})^T  \bar{U}^\dagger {=} \nonumber \\ 
  &    {\textstyle \sum_i} U X^{1}_i U^\dagger {\otimes} {\cdot} {\cdot} {\cdot} U X^{m}_i U^\dagger {\otimes} \bar{U} (X^{m+1}_i)^T \bar{U}^\dagger {\otimes} {\cdot} {\cdot} {\cdot} \bar{U} (X_i^{t})^T  \bar{U}^\dagger {=} \nonumber \\ 
  &      {\textstyle \sum_i} U X^{1}_i U^\dagger {\otimes} {\cdot} {\cdot} {\cdot} {\otimes}  (U^\dagger)^T (X^{m+1}_i)^T U^T {\otimes} {\cdot} {\cdot} {\cdot}   (U^\dagger)^T (X_i^{t})^T  U^T {=} \nonumber \\ 
   &            {\textstyle \sum_i} U X^{1}_i U^\dagger {\otimes} {\cdot} {\cdot} {\cdot} U X^{m}_i U^\dagger {\otimes}  (U X^{m+1}_i U^\dagger)^T {\otimes} {\cdot} {\cdot} {\cdot}   (U X_i^{t} U^\dagger)^T {=} \nonumber \\ 
  &  {\textstyle \sum_i}  {\rm{id}}^{\otimes m} \otimes \theta^{n} ( U X^{1}_i U^\dagger \otimes \cdots \otimes  U X^{t}_i U^\dagger) = \nonumber\\
  &   {\rm{id}}^{\otimes m} \otimes \theta^{n} (U^{\otimes t} X (U^\dagger)^{\otimes t}) = {\rm{id}}^{\otimes m} \otimes \theta^{n} ( X ),
\end{align}
where the last equality followed from the fact that $X$ commutes with $U^{\otimes t}$. This means that  for any $X \in \mathcal{C}(\mathcal{S}^{t})$ we have  ${\rm{id}}^{\otimes m} \otimes \theta^{n} (X) \in  \mathcal{C}(\mathcal{S}^{t-n, n})$. To prove that all the elements of $ \mathcal{C}(\mathcal{S}^{t-n, n})$ can be obtained this way, one can note that $ ({\rm{id}}^{\otimes m} \otimes \theta^{n}) \circ ({\rm{id}}^{\otimes m} \otimes \theta^{n}) = {\rm{id}}^{\otimes t}$ and then repeat a completely analogous proof for showing that $Y \in \mathcal{C}(\mathcal{S}^{t-n, n})$ implies $({\rm{id}}^{\otimes m} \otimes \theta^{n}) (Y) \in \mathcal{C}(\mathcal{S}^{t})$.
\end{proof}

\begin{theorem}\label{main3}
Let $\mathcal{S}$ be a set of gates in $G_d$. Then $\mathcal{S}$ is universal if and only if:
\begin{gather}\label{criterion2}
\mathcal{C}(\mathcal{S}^{t(d),t(d)})=\mathcal{C}( G_d^{t(d),t(d)}),
\end{gather}
where $t(2)=3$, and $t(d)=2$ for $d\geq 3$. 
\end{theorem}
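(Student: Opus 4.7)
The overall plan is to reduce Theorem~\ref{main3} to the dichotomy of Lemma~\ref{lema1} by using Lemma~\ref{lem:par_transpose} to convert each $(t,t)$-centralizer condition into a one-sided $2t$-centralizer condition, which via Fact~\ref{thmt-gr} excludes the finiteness of $G_\mathcal{S}$.

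The forward direction is immediate: if $\mathcal{S}$ is universal then $G_\mathcal{S}/\!\sim\, = PG_d$, so the representation $U\mapsto U^{\otimes t(d)}\otimes \bar U^{\otimes t(d)}$, which factors through $PG_d$, has the same image algebra, and hence the same commutant, when restricted from $G_d$ to $G_\mathcal{S}$.

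For the reverse direction, assume $\mathcal{C}(\mathcal{S}^{t(d),t(d)})=\mathcal{C}(G_d^{t(d),t(d)})$. Since $t(d)\geq 2$ in both cases, Lemma~\ref{lema2} yields the necessary universality condition $\mathcal{C}(\mathcal{S}^{1,1})=\mathcal{C}(G_d^{1,1})$, so by Lemma~\ref{lema1} either $\mathcal{S}$ is universal or $G_\mathcal{S}$ is finite. To rule out the latter alternative, I would apply Lemma~\ref{lem:par_transpose} with total power $2t(d)$ and split index $n=t(d)$: the partial transpose $\mathrm{id}^{\otimes t(d)}\otimes\theta^{t(d)}$ is an involution that carries $\mathcal{C}(\mathcal{S}^{2t(d)})$ bijectively onto $\mathcal{C}(\mathcal{S}^{t(d),t(d)})$, and $\mathcal{C}(G_d^{2t(d)})$ onto $\mathcal{C}(G_d^{t(d),t(d)})$. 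Hence the assumed equality transports to $\mathcal{C}(\mathcal{S}^{2t(d)})=\mathcal{C}(G_d^{2t(d)})$. Now $\mathcal{C}(\mathcal{S}^{2t(d)})=\mathcal{C}(G_\mathcal{S}^{2t(d)})$, and for finite $G_\mathcal{S}$ this equality is equivalent, via the vec-isomorphism identifying $G$-invariants of $U^{\otimes 2t(d)}\otimes\bar U^{\otimes 2t(d)}$ with $\mathcal{C}(G^{2t(d)})$, to $T_{\nu_{G_\mathcal{S}},2t(d)}=T_{\mu,2t(d)}$, i.e.\ to $G_\mathcal{S}$ being a unitary $2t(d)$-group. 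But $2t(2)=6$ and $2t(d)=4$ for $d\geq 3$ both fall in the ranges explicitly forbidden by Fact~\ref{thmt-gr}, giving a contradiction. Therefore $G_\mathcal{S}$ is infinite and Lemma~\ref{lema1} delivers universality.

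The expected main obstacle is bookkeeping rather than a new idea: one must invoke the full bijective content of Lemma~\ref{lem:par_transpose} (not merely dimension equality) so that $\mathcal{C}(\mathcal{S}^{t(d),t(d)})=\mathcal{C}(G_d^{t(d),t(d)})$ and $\mathcal{C}(\mathcal{S}^{2t(d)})=\mathcal{C}(G_d^{2t(d)})$ really are equivalent as subspaces, and one must translate the latter equality into the unitary $2t(d)$-group property of $G_\mathcal{S}$ before Fact~\ref{thmt-gr} can be brought to bear.
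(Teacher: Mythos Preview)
Your proposal is correct and follows essentially the same route as the paper: both arguments combine the partial-transpose bijection of Lemma~\ref{lem:par_transpose} (to pass between $\mathcal{C}(\mathcal{S}^{t(d),t(d)})$ and $\mathcal{C}(\mathcal{S}^{2t(d)})$) with Fact~\ref{thmt-gr} to exclude the finite branch of the Lemma~\ref{lema1} dichotomy. The only cosmetic differences are that the paper is organized as a contrapositive and invokes Remark~\ref{remark1b} to produce an explicit $A\in\mathcal{C}(\mathcal{S}^{2t(d)})\setminus\mathcal{C}(G_d^{2t(d)})$, whereas you argue directly and use the vec-isomorphism to identify the equality $\mathcal{C}(G_\mathcal{S}^{2t(d)})=\mathcal{C}(G_d^{2t(d)})$ with the $2t(d)$-group property of $G_\mathcal{S}$; these are equivalent packagings of the same step.
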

\begin{proof}
Suppose that the set of gates $\mathcal{S} \subset G_{d}$ is non-universal, and denote by $G_{\mathcal{S}}$ the group generated by ${\mathcal{S}}$. If $G_{\mathcal{S}}$ is infinite, then Lemma 2 guarantees that $\mathcal{C}(\mathcal{S}^{1,1}) \ne\mathcal{C}( G_d^{1,1})$, hence also $\mathcal{C}(\mathcal{S}^{n,n}) \ne\mathcal{C}( G_d^{n,n})$ for any positive integer $n$.  If $G_{\mathcal{S}}$ is finite, then we know that it cannot form a $k(d)$-design with $k(2)=6$ and $k(d)=4$ for $d \ge 3$. From  Remark~\ref{remark1b} it follows that there exists an operator $A$ such that  $A =\int_{G_d} d\nu_{G_\mathcal{S}}(U)
    U^{\otimes k(d)} A (U^\dagger)^{\otimes k(d)} \ne  \int_{G_d} d\mu(U) U^{\otimes k(d)} A (U^\dagger)^{\otimes k(d)}$. On the one hand, $A \in \mathcal{C}( \mathcal{S}^{k(d)})$,
    since every $G^{k}_{\mathcal{S}}$-twirled element commutes with the elements of  $G_{\mathcal{S}}^{k}$ \cite{diaconis}. On the other hand, $A \notin \mathcal{C}( G_d^{k(d)})$ since A is not equal to its $G^{k(d)}_d$-twirl. Thus,  $\mathcal{C}( \mathcal{S}^{k(d)}) \ne \mathcal{C}( G_d^{k(d)})$. Now using the last sentence of Lemma~\ref{lem:par_transpose}, we get that
    $\mathcal{C}( \mathcal{S}^{t(d),t(d)}) \ne \mathcal{C}( G_d^{t(d),t(d)})$, where $t(d)=k(d)/2$. Therefore, if $\mathcal{C}(\mathcal{S}^{t(d),t(d)}) = \mathcal{C}( G_d^{t(d),t(d)})$, then $\mathcal{S}$ has to be a universal gate set. This concludes the proof.
\end{proof}

One can calculate explicitly the dimension of the centralizer $\mathcal{C}( G_d^{t,t})$ from the following formula \cite{Stroomer94}
\begin{gather}\label{dim}
\dim \mathcal{C}( G_d^{t,t})=\sum_{\pi}(m_\pi)^2,
\end{gather}
where $\pi$ are irreducible representations occurring in the decomposition (\ref{decomposition}). Following Corollary 5.4 of \cite{Stroomer94} we know that 
\begin{gather}\label{d8}
\dim \mathcal{C}( G_d^{t,t})=(2t)!\, ,\,\, d\geq 2t.
\end{gather}
We can reformulate Theorem \ref{main3} to a more computationally friendly form:
\begin{theorem}\label{main4}
Let $\mathcal{S}$ be a set of gates in $G_d$. Then $\mathcal{S}$ is universal if and only if: (1) $\dim \mathcal{C}( {S}^{3,3})=132$ for $d=2$; (2) $\dim \mathcal{C}( \mathcal{S}^{2,2})=23$ for $d=3$; (3) $\dim \mathcal{C}( \mathcal{S}^{2,2})=4!=24$ for $d\geq 4$.
\end{theorem}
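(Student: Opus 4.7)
The plan is to invoke Theorem~\ref{main3} and convert its centralizer equality into a statement about dimensions. One always has the inclusion $\mathcal{C}(\mathcal{S}^{t,t}) \supseteq \mathcal{C}(G_d^{t,t})$, so equality of these two finite-dimensional subspaces of $\mathcal{B}((\mathbb{C}^d)^{\otimes 2t})$ is equivalent to equality of their dimensions. Hence Theorem~\ref{main4} will follow once I evaluate $\dim \mathcal{C}(G_d^{t(d),t(d)})$ in each of the three regimes and verify that the stated numbers $132$, $23$, $24$ come out.

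The case $d \geq 4$, $t(d)=2$ is immediate: since $2t(d)=4 \leq d$, the formula~(\ref{d8}) gives $\dim \mathcal{C}(G_d^{2,2}) = 4! = 24$. For the two remaining small-dimensional cases I would first apply Lemma~\ref{lem:par_transpose} with $n=t(d)$ to obtain $\dim \mathcal{C}(G_d^{t(d),t(d)}) = \dim \mathcal{C}(G_d^{2t(d)})$, and then invoke Schur--Weyl duality: the commutant of $\{U^{\otimes k} : U \in G_d\}$ on $(\mathbb{C}^d)^{\otimes k}$ is the image of $\mathbb{C}[S_k]$ under the permutation action, and decomposes as a direct sum of matrix algebras $\bigoplus_{\lambda \vdash k,\,\ell(\lambda)\leq d} M_{f^\lambda}(\mathbb{C})$, where $f^\lambda$ is the dimension of the irreducible $S_k$-representation indexed by $\lambda$. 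Consequently, $\dim \mathcal{C}(G_d^k) = \sum_{\lambda \vdash k,\,\ell(\lambda)\leq d} (f^\lambda)^2$.

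The rest is tabulation via the hook-length formula. For $d=3,\, k=4$ the partitions with at most three rows are $(4),(3,1),(2,2),(2,1,1)$, with $f^\lambda = 1,3,2,3$, so $\dim \mathcal{C}(G_3^{2,2}) = 1+9+4+9 = 23$. For $d=2,\, k=6$ the partitions with at most two rows are $(6),(5,1),(4,2),(3,3)$, with $f^\lambda = 1,5,9,5$, so $\dim \mathcal{C}(G_2^{3,3}) = 1+25+81+25 = 132$. There is no real obstacle here; the argument is essentially bookkeeping. The only point that needs care is the Schur--Weyl row restriction $\ell(\lambda)\leq d$, which is precisely what separates the small-$d$ answers from the uniform value $(2t)!$ of formula~(\ref{d8}) that would arise if all partitions were allowed.
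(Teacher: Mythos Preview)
Your proof is correct. The reduction of Theorem~\ref{main3} to a dimension count via the inclusion $\mathcal{C}(\mathcal{S}^{t,t})\supseteq\mathcal{C}(G_d^{t,t})$ is the same first step as in the paper. Where you diverge is in evaluating $\dim\mathcal{C}(G_d^{t(d),t(d)})$ for $d=2,3$: the paper works directly with the decomposition~\eqref{decomposition} of $U^{\otimes t}\otimes\bar U^{\otimes t}$ into $PG_d$-irreducibles, obtains the multiplicities $m_\pi$ by iterated Clebsch--Gordan rules (equations~\eqref{decompositionSU21} and~\eqref{decompositionSU3}), and then applies~\eqref{dim}. You instead pass through Lemma~\ref{lem:par_transpose} to $\mathcal{C}(G_d^{2t(d)})$ and invoke Schur--Weyl duality, summing $(f^\lambda)^2$ over partitions of $2t(d)$ with at most $d$ rows. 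The two routes are dual: the multiplicities $1,5,9,5$ the paper finds for $d=2$ are exactly your $f^{(6)},f^{(5,1)},f^{(4,2)},f^{(3,3)}$, as Schur--Weyl predicts. Your approach has the advantage of being uniform and avoiding ad hoc tensor-product decompositions; the paper's approach has the advantage of making the $PG_d$-representation content explicit, which connects more transparently to the $t$-design language of Lemma~\ref{remark1}.
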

\begin{proof}
Obviously, for any $\mathcal{S}\subset G_d$ we have $\dim \mathcal{C}( {S}^{t,t})\geq \dim \mathcal{C}( G_d^{t,t})$. The equality of these dimensions is possible if and only if the restrictions to $G_\mathcal{S}$ of all irreducible representations occurring in the decomposition (\ref{decomposition}) are irreducible. Thus $\dim \mathcal{C}( {S}^{t,t})= \dim \mathcal{C}( G_d^{t,t})$ if and only if $ \mathcal{C}( {S}^{t,t})= \mathcal{C}( G_d^{t,t})$. We are left with finding dimensions of $\mathcal{C}( G_d^{t(d),t(d)})$, where $t(d)$ is as in the statement of Theorem \ref{main3}. For $d\geq 4$ we use identity (\ref{d8}). For $d=2$ we have \begin{gather}\label{decompositionSU2}
    U^{\otimes t}\otimes \bar{U}^{\otimes t}=\bigoplus_{0\leq\nu\leq{2t},\,\nu - \mathrm{even}} m_{\pi_{\nu}}\pi_{\nu}(U), 
\end{gather}
where $\mathrm{dim\pi_{\nu}}=\nu+1$ and $m_{\pi_{\nu}}\neq 0$ for every even $\nu$ satisfying $0\leq\nu\leq 2t$. Moreover, $\mathrm{Ad}_{U}=\pi_2(U)$ and 
\begin{gather}\label{decompositionSU22}
    U^{\otimes t}\otimes \bar{U}^{\otimes t}\simeq (\pi_2(U)\oplus 1)^{\otimes t}. 
\end{gather}
In addition we known that $\pi_{l}(U)\otimes\pi_k(U)\simeq \pi_{l+k}\oplus \pi_{l+k-2}\oplus\ldots\oplus\pi_{|l-k|}$. Using these identities we find that
\begin{gather}\label{decompositionSU21}
    U^{\otimes 3}\otimes \bar{U}^{\otimes 3}\simeq \pi_{6}(U)\oplus 5\pi_4(U)+9\pi_{2}(U)\oplus 5. 
\end{gather}
Thus using formula (\ref{dim}) we get $\dim \mathcal{C}( G_2^{3,3})=132$. A decomposition similar to (\ref{decompositionSU2}) can be found for $PG_3$ \cite{Stroomer94,Roy09}. We know that irreducible representations of $PG_3$ are indexed by pairs of non-negative integers $\lambda=(\lambda_1,\lambda_2)$, where $\lambda_1\geq\lambda_2\geq0$. The adjoint representation of $G_3$ has $\lambda = (2,1)$. Using the rules for decomposing a tensor product of $\pi_{(\lambda_1,\lambda_2)}\otimes \pi_{(\lambda_1^\prime,\lambda_2^\prime)}$ into irreducible representations \cite{Stroomer94,Roy09} we found that for $d=3$
\begin{gather}\label{decompositionSU3}
    U^{\otimes 2}\otimes \bar{U}^{\otimes 2}\simeq \pi_{(4,2)}\oplus\pi_{(3,0)}\oplus \pi_{(3,3)}\oplus 4\pi_{(2,1)}\oplus 2,
\end{gather}
Thus using (\ref{dim}) we get
$\dim \mathcal{C}( G_3^{2,2})=23$. 
%where $\dim \pi_{(4,2)} = 27$, $\dim \pi_{(3,0)} = \dim \pi_{(3,3)} = 10$ and $\dim \pi_{(2,1)} = 8$.
\end{proof}

%In conclusion, our new criterion allows to 

Finally, we note that calculating $\dim \mathcal{C}( \mathcal{S}^{t(d),t(d)})$ boils down to solving a set of $|\mathcal{S}|d^{2t(d)}$ linear equations, where $t(d)$ is as in Theorem \ref{main3}.  Thus, the number of linear equations to be solved, for determining whether a gate set $\mathcal{S} \subset U(d)$ is universal or not, scales at most as  $O(d^{4})$ for $d \ge 3$. This means that our direct universality check is much more efficient and more feasible than the previously developed methods \cite{Adam, Sawicki2, Nets} with $O(d^{d^{5/2}})$ scaling. One can also compare our results with \cite{babai} that gives an algorithm that decides finitness of a matrix group in the polynomial time in $d$. This approach, however, is based on more mathematically abstract concepts, such as algebraic field extensions. Our method, on the other hand, is direct and uses concepts that are meaningful from the perspective of quantum circuits design, i.e. $t$-designs. For example, a gate-set in dimension $d\geq 3$ that is $\delta_4$-approximate $4$-design with $\delta_4$ very close to one is also very inefficient in terms of approximating unitaries. This follows from the fact that such a gate-set is also $\delta_t$-approximate $t$-design with $\delta_t \geq \delta_4$, for any $t>4$, and the  depth of $\epsilon$-approximating circuit is known to be inversely proportional to $1-\delta_t$, where $t=O(\epsilon^{-1})$ (see \cite{slowik-sawicki} for more details). Thus from the point of view of applications, gate-sets that are $\delta$-approximate $4$-designs with $\delta$ very close to one can be regarded as nonuniversal. Moreover, we also expect that our simple algebraic criterion for universality will allow general proofs about universal extension of different gate-set families going well beyond the earlier results \cite{Brylinski, Oszmaniec1,Mick}. We leave this as future work.\\

%as future work the utilization of our criterion to determine the universal extensions of relevant gate sets. \\

\noindent {\it Acknowledgements.} AS would like to thank M. Oszmaniec for ingenious discussions regarding Lemma~3. This work was supported by National Science Centre, Poland under the grant SONATA BIS:2015/18/E/ST1/00200, and by the Hungarian National Research, Development and Innovation Office through the Quantum Information National Laboratory program and Grants No. FK135220, KH129601, K124351.

\bibliographystyle{apsrev4-2}

\end{document}